\documentclass[submission,copyright,creativecommons]{eptcs}
\usepackage{breakurl}             
\usepackage{underscore}           

\usepackage[utf8]{inputenc}
\usepackage[T1]{fontenc}

\usepackage{algorithm2e}
\usepackage{amsmath}
\usepackage{amssymb}
\usepackage{color}
\usepackage[table]{xcolor}
\usepackage{amsthm}
\usepackage{hyperref}
\usepackage[caption=no]{subfig}
\usepackage{tikz}
\usetikzlibrary{calc,positioning}

\newtheorem{example}{Example}
\newtheorem{lemma}{Lemma}

\newtheorem{corollary}{Corollary}

\newtheorem{definition}{Definition}

\newcommand{\nonsupersets}{\ensuremath{\mathbin{\text{\raisebox{1pt}{$\scriptstyle\searrow$}}}}}

\newcommand{\HI}{\ensuremath{\text{\fontsize{9pt}{9pt}\selectfont HI}}}
\newcommand{\LO}{\ensuremath{\text{\fontsize{9pt}{9pt}\selectfont LO}}}
\newcommand{\CHOOSE}{\ensuremath{\text{\fontsize{9pt}{9pt}\selectfont CHOOSE}}}
\newcommand{\ZUNIQUE}{\ensuremath{\text{\fontsize{9pt}{9pt}\selectfont ZUNIQUE}}}

\newcommand{\FROM}{\ensuremath{\textit{from}}}
\newcommand{\TO}{\ensuremath{\textit{to}}}
\newcommand{\VALID}{\ensuremath{\textit{valid}}}
\newcommand{\BAD}{\ensuremath{\textit{bad}}}
\newcommand{\MAP}{\ensuremath{\textit{map}}}
\newcommand{\EDGES}{\ensuremath{\textit{edges}}}
\newcommand{\LAYERS}{\ensuremath{\textit{layers}}}

\definecolor{mygrey}{gray}{0.85}

\title{Using ZDDs in the Mapping of Quantum Circuits}
\author{Kaitlin Smith  \qquad Mitchell Thornton
\institute{Quantum Informatics Research Group, SMU\\
Dallas, Texas, USA}
\and
Mathias Soeken \qquad Bruno Schmitt\qquad Giovanni De Micheli
\institute{Integrated Systems Laboratory, EPFL\\
Lausanne, Switzerland}
}

\begin{document}
\maketitle

\begin{abstract}
A critical step in quantum compilation is the transformation of a technology-independent quantum circuit into a technology-dependent form for a targeted device. In addition to mapping quantum gates into the supported gate set, it is necessary to map pseudo qubits in the technology-independent circuit into physical qubits of the technology-dependent circuit such that coupling constraints among qubits acting in multiple-qubit gates are satisfied. It is usually not possible to find such a mapping without adding SWAP gates into the circuit. To cope with the technical limitations of NISQ-era quantum devices, it is advantageous to find a mapping that requires as few additional gates as possible. The large search space of possible mappings makes this task a difficult combinatorial optimization problem. In this work, we demonstrate how zero-suppressed decision diagrams (ZDDs) can be used for typical implementation tasks in quantum mapping algorithms. We show how to maximally partition a quantum circuit into blocks of adjacent gates, and if adjacent gates within a circuit do not share common mapping permutations, we attempt to combine them using parallelized SWAP operations represented in a ZDD. Boundaries for the partitions are formed where adjacent gates are unable to be combined. Within each partition block, ZDDs represent all possible mappings of pseudo qubits to physical qubits. 
\end{abstract}

\section{Introduction}
\label{Introduction}
Today’s NISQ-era quantum devices~\cite{Preskill18} support some given set of single- and two-qubit quantum gates. While single-qubit operations can be executed on any of the physical qubits, two-qubit quantum gates can only be performed by a pair of qubits that share a physical connection. The set of permissible qubit pairs are referred to as the coupling constraints. One task in quantum compilation algorithms is the mapping of a quantum circuit or algorithm, a sequence of quantum operations, onto the physical qubits of the device such that all two-qubit operations are executed with respect to device coupling constraints. This task is not always possible without including additional gates in the circuit.

Finding an optimum solution that minimizes the number of additional gates is ``NP-hard''~\cite{BKM18}. In order to efficiently find a solution, several heuristics have been proposed~\cite{SSCP18,ZPW18,HY18,LDX18}. Past work in mapping algorithms for quantum circuits are reported in~\cite{childs2019circuit,kate-ulsi-ws}. A common bottleneck in these heuristic methods is due to the large combinational search space resulting in numerous possible ways of mapping the gates to the device.

In this paper, we discuss how zero-suppressed decision diagrams (ZDDs,~\cite{Minato93,Knuth4A}) can be used in mapping algorithms to combat combinational complexity. These data structures were selected because nearest neighbor couplings within quantum devices make the possible connections between qubits in a quantum circuit a sparse set, and ZDDs, as compared to other types of decision diagrams, are efficient at representing sparse sets. We show how to implement two specific problems which appear in several heuristics: (1) finding a maximal subcircuit partition that can be mapped without adding gates, and (2) how to determine and choose among all possible SWAP circuits that can execute in parallel in order to extend a partition. Once a maximal subcircuit partition is determined, a pseudo to physical qubit permutation from that partition is used to map the quantum circuit to a technology platform. If the maximal partition spans the entire quantum circuit, the result of ZDD mapping is a circuit that is fully mapped with respect to the coupling constraints of a device.

Finding maximal partitions for quantum circuit mapping is solved using SAT in~\cite{HY18}. In contrast to the SAT-based solution that finds one possible mapping, the ZDD-based algorithm generates all possible mappings for a maximal partition. All solutions are represented implicitly by means of a decision diagram, which may be used to count all solutions, query some solutions, or compute the solution that minimizes some linear cost function. All such tasks can be performed in time that is linear with respect to the size of the ZDD. All possible parallel SWAP operations used to extend the size of a partition are also stored in a ZDD.

The algorithms reported in this paper are intended to serve as motivating examples to illustrate how ZDDs may be used as a data structure for implementing a mapping algorithm. These algorithms have been prototyped and evaluated on a set of benchmarks. Effectiveness of the ZDD mapping algorithms is determined by using them in a preprocessing step for the publically-available compilers developed for the IBM and Rigetti quantum devices.

\section{Preliminaries}
\label{Preliminaries}
\subsection{Graphs}
\label{Graphs}
An undirected graph $G = (V, E)$ consists of a set of vertices $V$ and a set of
edges $E \subseteq \binom{V}{2} = \{V' \subseteq V\mid |V'|=2\}$. Given two
undirected graphs $G_1 = (V_1, E_1)$ and $G_2 = (V_2, E_2)$, we say that $G_1$
is a subgraph of $G_2$, if there exists an injective function $\varphi : V_1 \to
V_2$ such that $\{v, w\} \in E_1$ implies $\{f(v), f(w)\} \in E_2$. A directed graph, $G = (V, E)$, is similar to a undirected graph, but edges consist of ordered pairs, or permutations, of nodes rather than bidirectional combinations. Therefore, in a directed graph, $E \subseteq P^{V}_{2} = \{V' \subseteq V\mid |V'|=2\}$ where $P^{n}_{k}$ indicates a $k$-permutation of $n$.

Given an ordered sequence $S = s_1, \dots, s_n$, we define an ordered partition
$S_1, \dots, S_l$ as a set of nonempty subsequences $S_i = s_{b_i}, s_{b_i+1},
\dots, s_{e_i}$ such that $s_{b_1} = s_1$, $s_{e_l} = s_n$, and $s_{b_i} = s_{e_{i-1} + 1}$ for
all $1 < i \le l$.

\begin{example}
  Let $S = 3, 5, 1, 2, 8, 2, 3, 4$.  Then $S_1 = 3, 5, 1$, $S_2 = 2, 8$, and
  $S_3 = 2, 3, 4$ is an ordered partition of $S$.
\end{example}

\subsection{Zero-suppressed decision diagrams}

Given a set of variables $X = \{x_1, \dots, x_n\}$, a
ZDD~\cite{Minato93,Knuth4A} is a directed acyclic graph with
nonterminal vertices $N$ and two terminal vertices $\top$ and $\bot$.
Each non-terminal vertex $v \in N$ is associated with a variable
$V(v) \in \{1, \dots, n\}$ and two successor nodes
$\HI(v), \LO(v) \in N \cup \{\top, \bot\}$.  The nodes on a path follow a fixed variable order on the way to a terminal node.  We have
$\HI(v) \in \{\top, \bot\}$ or $V(\HI(v)) > V(v)$ for all $v$.\footnote{To simplify the presentation in the paper, we assume the
  variable ordering $1 < 2 < \cdots < n$.  In practice, any
  permutation of this order can be used.}  The same applies to
$\LO(v)$.

Each vertex in the ZDD represents a finite family of finite subsets
over $X$ where families of sets are canonical up to order of the sets and repetition. The terminal node $\bot$ represents the \emph{empty family}
$\emptyset$ and the terminal node $\top$ represents the \emph{unit
  family} which is the set containing the empty set $\{\emptyset\}$.
Each non-terminal $v$ represents the subset
\begin{equation}
  \label{eq:zdd-recursive}
  \LO(v) \cup \{S \cup \{x_{V(v)}\} \mid S \in \HI(v)\}.
\end{equation}

A ZDD is \emph{reduced} if there are no two vertices that represent
the same sets.  This implies that in a reduced ZDD there cannot be a
vertex $v$ with $\HI(v) = \bot$, since such a vertex represents the
set $\LO(v)$.  For the sake of convenience, we use $\epsilon_x$ to
denote the \emph{elementary family} $\{\{x\}\}$ for each $x \in X$.
Finally, we use $\wp$ to refer to the ZDD that represents the
\emph{universal family} of all subsets of $X$.

\begin{figure}[t]
 \centering

  \begin{tikzpicture}[font=\small,x=.55cm,y=.55cm]
    \tikzstyle{term}=[draw,inner sep=1pt];
    \tikzstyle{nonterm}=[draw,circle,inner sep=1pt];
    \tikzstyle{lo}=[dash pattern=on 2pt off 1pt];
    \node[term] at (-.5,0) (bot) {$\bot$};
    \node[term] at (.5,0) (top) {$\top$};
    \node[nonterm] at (0,1) (4) {$4$};
    \node[nonterm] at (-.5,2) (3a) {$3$};
    \node[nonterm] at (.5,2) (3b) {$3$};
    \node[nonterm] at (-.5,3) (2a) {$2$};
    \node[nonterm] at (.5,3) (2b) {$2$};
    \node[nonterm] at (0,4) (1) {$1$};
    \draw[lo] (1) -- (2a);
    \draw (1) -- (2b);
    \draw[lo] (2a) -- (3a);
    \draw (2a) -- (3b);
    \draw[lo] (2b) -- (3b);
    \draw (2b) to[out=-60,in=60] (top);
    \draw[lo] (3a) -- (bot);
    \draw (3a) -- (4);
    \draw[lo] (3b) -- (4);
    \draw (3b) -- (top);
    \draw[lo] (4) -- (bot);
    \draw (4) -- (top);

    \begin{scope}[font=\footnotesize]
      \node[right=.3cm of 2b,anchor=west] (s-2b) {$\{\{x_2\}, \{x_3\}, \{x_4\}\}$};
      \node[left=.3cm of 2a,anchor=east] (s-2a) {$\{\{x_2, x_3\}, \{x_2, x_4\}, \{x_3, x_4\}\}$};
      \node[right=.3cm of 3b,anchor=west] (s-3b) {$\{\{x_3\}, \{x_4\}\}$};
      \node[left=.3cm of 3a,anchor=east] (s-3a) {$\{\{x_3, x_4\}\}$};

      \node[left=.3cm of 4,anchor=east] (s-4) {$\{\{x_4\}\}$};
    \end{scope}
  \end{tikzpicture}

  \centering
  \caption{A ZDD representing the family of sets
    $\{\{x_1, x_2\}, \{x_1, x_3\}, \{x_1, x_4\}, \{x_2, x_3\}, \{x_2,
    x_4\}, \{x_3, x_4\}\}$.  All internal non-terminal nodes are
    annotated with the sets they represent. Dashed edges indicate LO 		    and solid edges indicate HI.}
  \label{fig:zdd-example}
\end{figure}
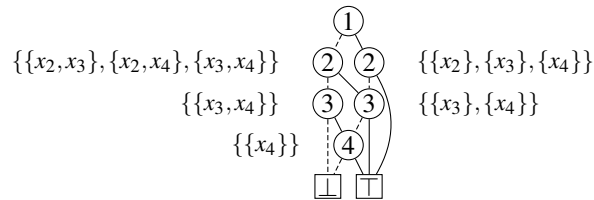

We write $|f|$ to denote the number of sets in a family~$f$.  We write
$Z(f)$ to denote the number of nodes, including the terminal nodes, of
the reduced ZDD for~$f$. It should be noted that because of the reduction rules associated with ZDDs, the data structure is a canonical representation of a function with respect to a fixed variable order.

\begin{example}
  Fig.~\ref{fig:zdd-example} shows a ZDD for
  $f =\{\{x_1, x_2\}, \{x_1, x_3\}, \{x_1, x_4\}, \{x_2, x_3\}, \{x_2,
  x_4\}, \{x_3, x_4\}\}$, i.e., all two-element subsets of
  $X = \{x_1, x_2, x_3, x_4\}$.  We have $|f| = 6$ and $Z(f) = 8$.
  In the general case, the ZDD $f$ that represents all $k$-element
  subsets of a set $\{x_1, \dots, x_n\}$ has $Z(f) = O(kn)$ nodes,
  while representing $|f| = \binom{n}{k}$ sets.
\end{example}

Given two ZDDs $f$ and $g$, the following list of operations is part
of what is called a ZDD family algebra.  Each operation can be
efficiently implemented using ZDDs.
\begin{align*}
  f \cup g &= \{\alpha \mid \text{$\alpha \in f$ or $\alpha \in g$}\} & \textit{union} \\
  f \cap g &= \{\alpha \mid \text{$\alpha \in f$ and $\alpha \in g$}\} & \textit{intersection} \\
  f \setminus g &= \{\alpha \mid \text{$\alpha \in f$ and $\alpha \notin g$}\} & \textit{difference} \\
  f \sqcup g &= \{\alpha \cup \beta \mid \text{$\alpha \in f$ and $\beta \in g$}\} & \textit{join} \\
  f \sqcap g &= \{\alpha \cap \beta \mid \text{$\alpha \in f$ and $\beta \in g$}\} & \textit{meet} \\
  f \nonsupersets g &= \{\alpha \in f \mid \text{$\beta \in g$ implies $\alpha \not\supseteq \beta$}\} & \textit{nonsupersets}
\end{align*}

Finally, if $f$ represents the family
$\epsilon_{x'_1} \cup \cdots \cup \epsilon_{x'_l}$ for some subset
$\{x'_1, \dots, x'_l\} = X' \subseteq X$, then
\[
  \binom{f}{k}
\]
is the ZDD that represents the family $\binom{X'}{k}$.

Note that the \emph{nonsupersets} operation can be described in terms
of the others: $f \nonsupersets g = f \setminus (f \sqcup g)$.
However, it may be more efficient to implement the operation
explicitly in a ZDD package.  For a detailed description of how ZDDs
are represented in memory and how the ZDD family algebra operations
are implemented, the reader is referred to the
literature~\cite{Knuth4A}.

\subsection{Physical Quantum Architectures}
\label{Physical Quantum Architectures}

\subsubsection{Rigetti}
\label{Rigetti QPUs}

Rigetti has developed quantum machines based on solid-state, superconducting circuit technology. The company has also developed the Python library pyQuil as well as a software development kit (SDK) called \textit{Forest} that can be used to write quantum algorithms, to interact with quantum processing units (QPUs), and to simulate quantum computing. The quantum instruction language Quil is used to specify algorithms for the Rigetti QPUs~\cite{smith2016practical}. Within the SDK, it is possible to create custom architectures that can be targeted by the Rigetti compiler. To implement an algorithm on the Rigetti QPUs, complex gates must be decomposed into the native gate set of $R_z(\theta), R_x(\frac{k\pi}{2})$ where $k$ is an integer value, and the two-qubit operator $CZ$. As an additional constraint, $CZ$ operations are limited with respect to where they may be placed on the device due to the constraints among qubits. The Rigetti compiler may be used to transform technology-independent circuits into forms that have the appropriate gate library and connections for QPU execution. This compiler performs mapping and minimization procedures, but the resulting circuits may not necessarily be the optimum solution.

\subsubsection{IBM}
\label{IBM}

IBM has also developed quantum computers based on solid-state, superconducting circuit technology. The Python SDK \textit{Qiskit} is their tool for performing quantum information processing (QIP) with their platform. Quantum assembly language, or QASM, is used to specify quantum logic for execution on the IBM quantum machines~\cite{cross2017open}. In order for a quantum circuit to be executable on a real machine, the QASM specification must not only obey coupling map constrictions but also contain operators within the gate set of $R_z(\theta)$, $R_x(\phi)$, $R_y(\gamma)$, and the two-qubit operator $CX$. The compiler contained in the \textit{Qiskit} SDK may be used for transforming circuits into a technology-ready form, but as with the Rigetti compiler, solutions may not be optimum. \textit{Qiskit} allows for compilation to custom architectures so that devices outside of the existing IBM machines may be targeted.

\section{Problem formulation}
In this paper, we model the quantum circuit that is to be mapped to a
quantum device as a set of pseudo qubits $V = \{v_1, \dots, v_n\}$ and
an ordered sequence of two-qubit gates $G_j = g_1, \dots, g_n$, with
$g_i \in \binom{V}{2}$.  We can safely ignore the one-qubit gates in
the circuit, since the coupling constraints of the device do not
affect their mapping.  Also note that we do not take the direction of
a gate (e.g., the position of control and target in a $CX$) into
account as unidirectional gates may be reversed, as seen in~\cite{SSCP18}, with single qubit operations.

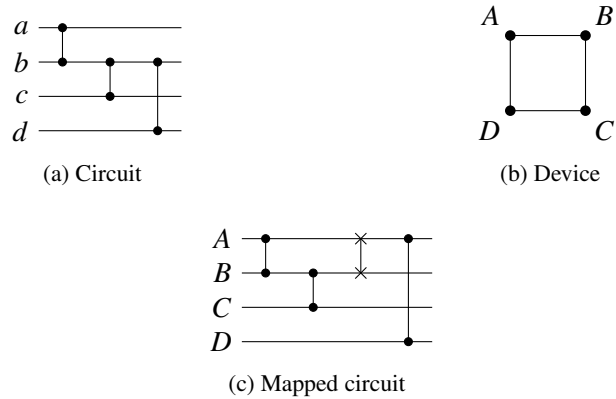
\begin{figure}[t]
  \centering
  \subfloat[Circuit]{\tikzpicture[scale=1.000000,x=1pt,y=1pt]
\filldraw[color=white] (0.000000, -6.500000) rectangle (54.000000, 45.500000);
\draw[color=black] (0.000000,39.000000) -- (54.000000,39.000000);
\draw[color=black] (0.000000,39.000000) node[left] {$a$};
\draw[color=black] (0.000000,26.000000) -- (54.000000,26.000000);
\draw[color=black] (0.000000,26.000000) node[left] {$b$};
\draw[color=black] (0.000000,13.000000) -- (54.000000,13.000000);
\draw[color=black] (0.000000,13.000000) node[left] {$c$};
\draw[color=black] (0.000000,0.000000) -- (54.000000,0.000000);
\draw[color=black] (0.000000,0.000000) node[left] {$d$};
\draw (9.000000,39.000000) -- (9.000000,26.000000);
\filldraw (9.000000, 39.000000) circle(1.500000pt);
\filldraw (9.000000, 26.000000) circle(1.500000pt);
\draw (27.000000,26.000000) -- (27.000000,13.000000);
\filldraw (27.000000, 26.000000) circle(1.500000pt);
\filldraw (27.000000, 13.000000) circle(1.500000pt);
\draw (45.000000,26.000000) -- (45.000000,0.000000);
\filldraw (45.000000, 26.000000) circle(1.500000pt);
\filldraw (45.000000, 0.000000) circle(1.500000pt);
\endtikzpicture}\hfil
  \subfloat[Device]{
\begin{tikzpicture}
  \fill (0,1) coordinate (A) circle (2pt) node[above left] {$A$};
  \fill (1,1) coordinate (B) circle (2pt) node[above right] {$B$};
  \fill (1,0) coordinate (C) circle (2pt) node[below right] {$C$};
  \fill (0,0) coordinate (D) circle (2pt) node[below left] {$D$};

  \draw (A) -- (B) -- (C) -- (D) -- (A);
\end{tikzpicture}}

  \subfloat[Mapped circuit]{\tikzpicture[scale=1.000000,x=1pt,y=1pt]
\filldraw[color=white] (0.000000, -6.500000) rectangle (72.000000, 45.500000);
\draw[color=black] (0.000000,39.000000) -- (72.000000,39.000000);
\draw[color=black] (0.000000,39.000000) node[left] {$A$};
\draw[color=black] (0.000000,26.000000) -- (72.000000,26.000000);
\draw[color=black] (0.000000,26.000000) node[left] {$B$};
\draw[color=black] (0.000000,13.000000) -- (72.000000,13.000000);
\draw[color=black] (0.000000,13.000000) node[left] {$C$};
\draw[color=black] (0.000000,0.000000) -- (72.000000,0.000000);
\draw[color=black] (0.000000,0.000000) node[left] {$D$};
\draw (9.000000,39.000000) -- (9.000000,26.000000);
\filldraw (9.000000, 39.000000) circle(1.500000pt);
\filldraw (9.000000, 26.000000) circle(1.500000pt);
\draw (27.000000,26.000000) -- (27.000000,13.000000);
\filldraw (27.000000, 26.000000) circle(1.500000pt);
\filldraw (27.000000, 13.000000) circle(1.500000pt);
\draw (45.000000,39.000000) -- (45.000000,26.000000);
\scope
\draw (42.878680, 36.878680) -- (47.121320, 41.121320);
\draw (42.878680, 41.121320) -- (47.121320, 36.878680);
\endscope
\scope
\draw (42.878680, 23.878680) -- (47.121320, 28.121320);
\draw (42.878680, 28.121320) -- (47.121320, 23.878680);
\endscope
\draw (63.000000,39.000000) -- (63.000000,0.000000);
\filldraw (63.000000, 39.000000) circle(1.500000pt);
\filldraw (63.000000, 0.000000) circle(1.500000pt);
\endtikzpicture}
  \caption{Simple circuit and device}
  \label{fig:circarch}
\end{figure}

\begin{example}
  Fig.~\ref{fig:circarch}(a) shows a quantum circuit on four pseudo
  qubits $V = \{a, b, c, d\}$ and three two-qubit gates
  $g_1 = \{a, b\}$, $g_2 = \{b, c\}$, and $g_3 = \{b, d\}$.
\end{example}

A quantum device is modeled by an undirected graph $(P,E)$, where $P = \{p_1,
\dots, p_m\}$ is a set of physical qubits and an edge $\{p,q\} \in E \subseteq
\binom{P}{2}$ states that a 2-qubit operation can be executed using the two
physical qubits $p$ and $q$.

\begin{example}
  Fig.~\ref{fig:circarch}(b) shows a simple four-qubit quantum device with physical
  qubits $P = \{A, B, C, D\}$ and a ring topology, i.e., edges $E = \{\{A, B\},
  \{B, C\}, \{C, D\}, \{D, A\}\}.$
\end{example}

The goal is to find a mapping $\varphi : V \to P$ of pseudo qubits
into physical ones, such that all two-qubit operations in the circuit
are executed on adjacent qubits according to the device's coupling
constraints.  It may not be possible to find such mapping for the
input circuit.  For example, there exists no such mapping for the
example circuit, since the pseudo qubit $b$ interacts with three other
qubits.  However, by adding SWAP gates to reorder pseudo qubits, a
mapping can be achieved. A SWAP gate is a two-qubit operation that can either be implemented with $CX$ or single-qubit rotations and $CZ$.

\begin{example}
  By adding a SWAP gate after the second gate of the circuit in
  Fig.~\ref{fig:circarch}(a), one must update all successive gates in order to
  retain the functionality as is seen in Fig.~\ref{fig:circarch}(c).  However, the transformed circuit can now be mapped to
  the quantum device by mapping $a \mapsto A$, $b \mapsto B$, $c \mapsto C$, and
  $d \mapsto D$.
\end{example}

The aim is to use a small number of SWAP gates when transforming an initial
circuit to a circuit that can be mapped into a target device.  Finding the
globally optimum mapping and a transformed circuit using the fewest number of
SWAP gates is a computationally complex and time-consuming task~\cite{BKM18}.

To address the problem of finding maximal partitions for a circuit using ZDDs, we use partitions where

 \begin{itemize}
\item there exists a $(V, G_j)$ that is a subgraph of $(P, E)$ for $1 \le j \le l$,
\item there is not a $(V, G_j \cup \{g_{{e_j+1}}\})$ that is a subgraph of $(P, E)$ for $1 \le j < l$.
\end{itemize}

Each partition has an associated set of mappings of pseudo to physical qubits $\Phi_j = \{\varphi : V \mapsto P \}\ $ where $\varphi$ is a subgraph isomorphism of $(V, G_j)$ to $(P, E)$. If partition $G_j$ starting at gate $g_b$ cannot be extended, SWAP operations are inserted to merge the last gate of the partition, $g_e$ with the adjacent gate, $g_{e+1}$, in the circuit. These swapping operations, referred to as $\LAYERS$, exchange information on the adjacent physical qubits of the device and are executable in parallel within a single time cycle. The best SWAP layer is chosen according to a scoring metric. Once selected, the SWAP layer merges the gate $g_{i}$ with $g_{i+1}$ by inserting SWAP gates before $g_{i+1}$, extending $G_j$.

Ideally, a single partition will cover the entire circuit, providing a set of mappings that assign pseudo qubits to physical qubits on the device. In the case that multiple partitions exist that are fully extended with inserted SWAP $\LAYERS$, a mapping for the circuit is selected using the largest, and therefore maximal, partition.

\begin{definition} \textit{Maximal Partition} \label{def:maxpart} \\
A maximal partition within a quantum circuit is the largest subset of gates $G_j$ that can be covered by a set of mappings of pseudo to physical qubits, $\Phi_j $. Maximal partitions are fully expanded with parallel SWAP operations that exchange information between physical qubits.
\end{definition}

\section{Finding maximal partitions}
\label{Finding maximal partitions}
In this section, we describe how to use ZDDs and ZDD operations to
find a maximal partition that starts in some gate $g_i$. The ZDDs are defined over the $nm$ variables $vp$ for each $v \in V$ and each $p \in P$.  Each ZDD
represents a family of finite subsets, and each subset $\alpha$
represents a partial mapping $\varphi : V \to P$, where
$\varphi(v) = p$, if and only if $vp \in \alpha$ where $\alpha$ is a mapping.

\begin{example}
  The ZDD $f = \{\{aA, bB, cC, dD\}\}$ represents a family of a single mapping
  that maps $a \mapsto A$, $b \mapsto B$, $c \mapsto C$, and $d \mapsto D$.
\end{example}

First, we define some general sets, which are used throughout the following
operations.  It is sufficient to initialize these sets once at the beginning of
the algorithm.  The set
\begin{equation}
  \label{eq:fromv}
  \FROM(v) = \bigcup_{p \in P} \epsilon_{vp}
\end{equation}
contains all singleton mappings $v \mapsto p$ for some $v \in V$.  Analogously,
the set
\begin{equation}
  \label{eq:tov}
  \TO(p) = \bigcup_{v \in V} \epsilon_{vp}
\end{equation}
contains all singleton mappings $v \mapsto p$ for some $p \in P$.

\begin{example}
  For the example circuit and device we have $\FROM(a) = \{\{aA\}, \{aB\},
  \{aC\}, \{aD\}\}$ and $\TO(A) = \{\{aA\}, \{bA\}, \{cA\}, \{dA\}\}$.
\end{example}

Using this set, we can define two other helpful sets, $\VALID$ and $\BAD$, using
the ZDD family algebra operations.  The set $\VALID$ contains all two-element
partial mappings that are feasible with respect to the coupling constraints of
the device: 
\begin{equation}
  \VALID = \bigcup_{\{p, q\} \in E} \TO(p) \sqcup \TO(q)
\end{equation}

The set $\BAD$ contains all two element sets of illegal partial mappings,
because they either contain an element with two images or two elements which map
to the same image:
\begin{equation}
  \BAD = \bigcup_{v\in V} \binom{\FROM(v)}{2} \cup \bigcup_{p \in P} \binom{\TO(p)}{2}
\end{equation}

\begin{lemma}
  $\beta \in \BAD$ if and only if either $\beta = \{vp, vq\}$ and
  $p \neq q$ or $\beta = \{vp, wp\}$ and $v \neq w$, for all
  $v, w \in V$ and all $p, q \in P$.
\end{lemma}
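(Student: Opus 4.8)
\section*{Proof plan}

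The plan is to prove the equivalence by directly unfolding the definitions of $\BAD$, $\FROM$, $\TO$, and the binomial operation $\binom{\cdot}{2}$, tracing set membership through each and establishing both directions of the biconditional. No deep idea is needed here; the work is in identifying precisely which two-element sets each piece of the union contributes.

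First I would note that $\FROM(v) = \bigcup_{p \in P} \epsilon_{vp}$ is exactly a family of singletons, namely $\{\{vp\} \mid p \in P\}$, which is of the form $\epsilon_{x'_1} \cup \cdots \cup \epsilon_{x'_l}$ required by the binomial operation, with underlying variable set $X' = \{vp \mid p \in P\}$. By the definition of $\binom{f}{k}$, the family $\binom{\FROM(v)}{2}$ then represents $\binom{X'}{2}$, the collection of all two-element subsets of $X'$. Such a subset consists of two \emph{distinct} variables $vp$ and $vq$, and these are distinct precisely when $p \neq q$; hence $\binom{\FROM(v)}{2} = \{\{vp, vq\} \mid p, q \in P,\ p \neq q\}$. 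The symmetric argument applied to $\TO(p) = \{\{vp\} \mid v \in V\}$, with underlying set $\{vp \mid v \in V\}$, gives $\binom{\TO(p)}{2} = \{\{vp, wp\} \mid v, w \in V,\ v \neq w\}$. With these two identities in hand, the lemma reduces to applying the definition of the union in the family algebra: for the forward direction, if $\beta \in \BAD$ then $\beta \in \binom{\FROM(v)}{2}$ for some $v$ or $\beta \in \binom{\TO(p)}{2}$ for some $p$, which are exactly the two stated alternatives; for the backward direction, a $\beta$ of either stated form lies in the corresponding piece (its two distinct variables both belong to the relevant $X'$), hence in $\BAD$.

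The reasoning is essentially bookkeeping, so I do not expect a genuine obstacle. The one point demanding care is the role of the distinctness conditions $p \neq q$ and $v \neq w$: these are not extra hypotheses but are forced by the requirement that a two-element subset contain two distinct variables. Accordingly, in the write-up I would make explicit that $vp = vq$ iff $p = q$ and $vp = wp$ iff $v = w$, so that the equivalence between ``two distinct variables of $X'$'' and the stated inequalities is transparent and the biconditional closes cleanly in both directions.
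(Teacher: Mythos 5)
Your proposal is correct and follows essentially the same route as the paper: both reduce the lemma to the observation that $\binom{\FROM(v)}{2}$ and $\binom{\TO(p)}{2}$ are exactly the two-element subsets of the corresponding singleton families, which forces the distinctness conditions $p \neq q$ and $v \neq w$. The only cosmetic difference is that the paper justifies this via the identity $\binom{f}{2} = (f \sqcup f) \setminus f$ for singleton families, whereas you appeal directly to the definition of $\binom{f}{k}$ as representing $\binom{X'}{k}$; both are sound.
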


\begin{proof}
Note that 
\[
\binom{f}{2} = (f \sqcup f) \setminus f
\]
 if $f$ only contains singletons. Therefore, $\binom{\FROM(v)}{2}$ contains all sets $\{vp, vq\}$ where $p \neq q$. Similarly, $\binom{\TO(p)}{2}$ contains all sets $\{vp, wp\}$ for $v \neq w$. For the other direction, note that all pairs are included due to the construction of $\FROM$ and $\TO$ in Eqn.~\ref{eq:fromv} and Eqn.~\ref{eq:tov}.   
  
\end{proof}

\begin{corollary}
  A set $\alpha$ represents a partial mapping if and only if all two-element, illegal mappings are removed where a single pseudo qubit is assigned to multiple physical qubits or where multiple pseudo qubits are assigned to a single physical qubit. Thus, there exists no
  $\beta \in \BAD$ such that $\beta \subseteq \alpha$, respectively.
\end{corollary}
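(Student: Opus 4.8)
The plan is to reduce the corollary almost entirely to the preceding Lemma, which already supplies an explicit membership test for $\BAD$. The only genuine content is to unwind the definition of ``$\alpha$ represents a partial mapping'' and to match its two failure modes to the two shapes of elements of $\BAD$ recorded in the Lemma.

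First I would fix the terminology so the biconditional has a precise meaning. A set $\alpha$ represents a partial mapping $\varphi : V \to P$ exactly when two conditions hold: (i) $\varphi$ is well defined as a partial function, i.e.\ no pseudo qubit $v$ is sent to two distinct physical qubits; and (ii) $\varphi$ is injective, i.e.\ no physical qubit $p$ receives two distinct pseudo qubits. Unwinding the encoding $\varphi(v)=p \iff vp \in \alpha$, condition (i) fails precisely when $\{vp, vq\} \subseteq \alpha$ for some $p \neq q$, and condition (ii) fails precisely when $\{vp, wp\} \subseteq \alpha$ for some $v \neq w$.

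I would then argue both directions by contraposition. For the forward implication, suppose some $\beta \in \BAD$ satisfies $\beta \subseteq \alpha$; by the Lemma, $\beta$ is either $\{vp, vq\}$ with $p \neq q$ or $\{vp, wp\}$ with $v \neq w$, so $\alpha$ violates (i) or (ii) and therefore does not represent a partial mapping. For the converse, suppose $\alpha$ fails to represent a partial mapping; then (i) or (ii) is violated, which exhibits a two-element subset of $\alpha$ having one of the two offending shapes, and the Lemma certifies that this subset lies in $\BAD$. Hence some $\beta \in \BAD$ is contained in $\alpha$.

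Since the argument is essentially a translation between the subset relation and the function/injectivity constraints, I do not expect a substantive obstacle. The one point requiring care is that the equivalence is used at the level of individual pairs and in \emph{both} directions of the Lemma: its ``only if'' part guarantees that every $\beta \in \BAD$ has one of the two offending shapes, while its ``if'' part guarantees that every offending pair inside $\alpha$ actually belongs to $\BAD$. Being explicit that these two patterns exhaust exactly the well-definedness and injectivity violations, and that nothing else is forbidden, is what makes the characterization tight rather than merely one-directional.
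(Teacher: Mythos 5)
Your argument is correct and matches the paper's intent: the paper states this corollary without an explicit proof, treating it as an immediate consequence of the lemma, and your two-direction contrapositive unwinding (matching the well-definedness and injectivity failures of $\varphi$ to the two shapes of elements of $\BAD$) is exactly the reasoning being left implicit. No gaps.
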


Last, we define the set $\MAP(i)$ which represents all possible
mappings of the pseudo qubits in gate $g_i = \{v, w\}$:
\begin{equation}
  \label{eq:map}
  \MAP(i) = (\FROM(v) \sqcup \FROM(w)) \cap \VALID
\end{equation}

In other words, we first join all possible mappings of $v$ with all
possible mappings of $w$, before we restrict the result two-element
subsets to those which are valid with respect to the target device.

\begin{example}
  Gate $g_1 = \{a, b\}$ can be mapped in eight different ways:
  \[
    \begin{aligned}
      \MAP(1) & = \{\{aA, bB\}, \{aB, bC\}, \{aC, bD\}, \{aD, bA\} \\
      & \phantom{{}=\{{}} \{aB, bA\}, \{aC, bB\}, \{aD, bC\}, \{aA, bD\}\}
    \end{aligned}
  \]
  Also gate $g_2 = \{b, c\}$ can be mapped in eight different ways:
  \[
    \begin{aligned}
      \MAP(2) & = \{\{bA, cB\}, \{bB, cC\}, \{bC, cD\}, \{bD, cA\} \\
      & \phantom{{}=\{{}} \{bB, cA\}, \{bC, cB\}, \{bD, cC\}, \{bA, cD\}\}
    \end{aligned}
  \]
\end{example}

Finally, the possible mappings of two consecutive gates $g_i$ and
$g_{i+1}$ can be computed using
\begin{equation}
  \label{eq:map-consec}
  (\MAP(i) \sqcup \MAP(i+1)) \nonsupersets \BAD.
\end{equation}

\begin{example}
  Recall the two families $\MAP(1)$ and $\MAP(2)$ from the previous
  example.  Joining them leads to a family with up to 64 subsets, out
  of which many do not represent legal partial mappings such as
  $\{aA, bB, bA, cB\}$.  These can be removed using the restriction to
  $\BAD$, resulting a family consisting of the only eight legal
  partial mappings:
  \begin{multline*}
    \{\{aA, bB, cC\}, \{aB, bC, cD\}, \{aC, bD, cA\}, \\ \{aD, bA, cB\},
    \{aA, bD, cC\}, \{aB, bA, cD\}, \\ \{aC, bB, cA\}, \{aD, bC, cB\}\}
  \end{multline*}
\end{example}

\begin{algorithm}[t]
  \KwData{Gate sequence $g_1, \dots, g_k$, and device $(P, E)$}
  \KwResult{partitions $G_j$ with begin and end indexes $b_j, e_j$; all possible mappings $\Phi_j$}
  Set $j \gets 1$, $b_j \gets 1$, $m \gets \MAP(1)$\;
  \For{$i = 2, \dots, k$}{%
    Set $m' \gets (m \sqcup \MAP(i)) \nonsupersets \BAD$\;
    \uIf{$m' = \emptyset$}{%
      \For{$layer$ in $layers$}{%
        calculate scores\;
      }
      \uIf{max score $ \neq 0$}{%
         insert SWAP circuit\;
         update topology\;
         Set $m($max score$)' \gets (m \sqcup \MAP(i)) \nonsupersets \BAD$\;
         Set $m \gets m'$\;

      }\Else{%
        Set $e_j \gets i - 1$, $\Phi_j \gets m$\;
        Set $j \gets j + 1$, $b_j \gets i$, $m \gets \MAP(i)$\;      
      }
    } \Else {%
      Set $m \gets m'$\;
    }
  }
  Set $e_j \gets k$, $\Phi_j \gets m$\;
  \caption{Find maximal partitions}\label{max-partitions}
\end{algorithm}

In some instances, Eqn.~\ref{eq:map-consec} results in the empty set, $\emptyset$, whenever the mappings of two consecutive gates are combined. In this case, SWAP procedures must be performed on physical qubits to exchange pseudo qubit information and extend the partition $G_j$. 

Finding an ideal SWAP circuit during quantum circuit mapping is an intractable problem that works in an exponentially-growing state space. In our method, we narrow this search space by focusing on the implementation of all SWAP circuits that can be executed in parallel during a single time cycle. We call this particular set of SWAP circuits $\LAYERS$, and it is a combination set rather than permutation set of ordered SWAP operations. ZDDs are a good data structure for combinatorial set representation, so a ZDD is implemented to represent the $\LAYERS$ set. Acceptable SWAP operations are determined by the topology of the device, and the sets of operations that can be executed simultaneously within a time cycle are desired. ZDDs are used to create a set of all ``good'' SWAP
circuits, which are those that interact with at least one qubit in the
image of $\varphi$ and the depth of the circuit is one, i.e., a SWAP
circuit may only contain multiple SWAP gates as long as qubits between gates are not shared.  The ZDDs for this task differ from those used to enumerate all mappings, and they do not share any variables. The SWAP circuit ZDDs are defined over the $|E|$ variables $e$ for
each $e \in E$ since SWAP gates can only be placed on certain edges connecting physical qubits according
to the quantum device operational characteristics.  We initialize the ZDD base, or the 1-terminal node, with the following
ZDD.  For each $p \in P$, the ZDD
\begin{equation}
  \label{eq:edges}
  \EDGES(p) = \bigcup\{\epsilon_e \mid \text{$e \in E$ s.t. $p \in e$}\}
\end{equation}
contains all SWAPs that interact with qubit $p$.

All possible subsets of SWAP gates that can be executed in parallel
(i.e., in depth 1) are described by the ZDD
\begin{equation}
  \label{eq:2}
  \LAYERS = \wp \nonsupersets \bigcup_{p\in P} \binom{\EDGES(p)}{2}.
\end{equation}

\begin{example}
The set of SWAP gates that can be parallelized for the topology in Fig.~\ref{fig:circarch}(b) can be determined using Eqn.~\ref{eq:2}. First, the set of edges for all physical qubits $p$ in $P$ must be defined:
\[
edges(A) = \{\{AB\},\{AD\}\},
\]
\[
edges(B) = \{\{AB\},\{BC\}\},
\]
\[
edges(C) = \{\{BC\},\{CD\}\},
\]
\[
edges(D) = \{\{CD\},\{AD\}\}.
\]

\noindent Next, a set must be created that consists of the union of all of the $\binom{N}{k}$ operations where $N = edges(p)$ and $k=2$:
\[
\bigcup_{p\in P} \binom{\EDGES(p)}{2} = \{\{AB,AD\},\{AB,BC\},\{BC,CD\},\{CD,AD\}\}.
\]
\noindent Finally, the nonsupersets operation is implemented between the tautology, or universal family of all subsets for the physical qubit edges. The set $\wp$ is defined as
 \begin{multline*}
\wp = \{ \emptyset, \{AB\},\{BC\},\{CD\},\{AD\},\{AB,BC\},\{AB,CD\},\{AB,AD\},\{BC,CD\},\{BC,AD\}, \\ \{CD,AD\}, \{AB,BC,CD\},\{AB,BC,AD\},\{AB,CD,AD\},\{BC,CD,AD\},\{AB,BC,CD,AD\}\},
  \end{multline*}

\noindent allowing the $ \LAYERS$ for the device in \ref{fig:circarch}(b) to be calculated as

\[
\LAYERS =  \{ \emptyset, \{AB\},\{BC\},\{CD\},\{AD\},\{AB,CD\},\{BC,AD\}\}.
\]
\end{example}

Not all combinations of SWAP gates in $\LAYERS$ may be useful for extending a partition of gates. For example, some SWAP circuits may allow the partition $G_j$ to extend further and have greater depth while others provide more mapping options within $\Phi_j$. As a result, a scoring function is calculated for each member of \textit{layers} to determine the optimal SWAP decision every time it is desired to extend a circuit partition. The function of 

\begin{equation}
\label{eq:score}
score = \left(A\alpha + B\beta\right)\frac{\gamma}{C}
\end{equation}

\noindent is used to select the optimal SWAP circuit according priority weights set by the user and feature counts of the circuit being mapped. In Eqn.~\ref{eq:score}, $\alpha$ is depth weight and $A$ is depth count where depth count in this case describes how many additional two-qubit gates the current partition could cover until the end of the partition is reached if a specific SWAP circuit is implemented. The variable $\beta$ is map weight and $B$ is map count that describes how many maps of pseudo to physical qubits, $\varphi : V \to P$, will be available in $\Phi_j$ if a specific swap circuit is implemented. Finally $\gamma$ is SWAP weight and and $C$ is SWAP operation count of the implemented SWAP circuit. In Eqn.~\ref{eq:score}, SWAP count has an inverse relationship with a layer's \textit{score} as lower overall gate counts, or gate volume, in a technology-mapped implementation are preferred. The parameter $\gamma$, however, can be adjusted to make the cost of an additional SWAP operations less severe. The weights of this function can be tuned to prioritize the growth of the partition with respect to either gate coverage or available mappings if a SWAP circuit layer in \textit{layers} is implemented. In the case that \textit{score} for each layer is zero, the current partition cannot be extended and a new partition must start to continue to map the circuit.

Algorithm~\ref{max-partitions} implements ZDD data structures and the aforementioned ZDD operations to
compute maximal partitions of quantum circuits starting from the first
gate.  A counter $j$ indicates the current partition
number as the algorithm parses through the operators in the network.  In $m$, a set of mappings are stored and updated as the current partition increases in size. These maps are eventually stored in $\Phi_j$.  For
each gate $i$ we try to extend $m$ by adding the gate using $\MAP(i)$,
and storing the resulting mappings in $m'$.  If $m'$ is empty, $layers$ will be used to determine if a SWAP operation can be implemented in order to increase the current partition. The $scores$ for all of the sets in $layers$ are calculated, and the SWAP circuit with the largest $score$ is used to extend the partition. After the SWAP is implemented, the topology of the device is updated, maximum $score$ $m'$ is calculated, and $m$ is updated with $m'$. If the maximum $score$ is zero, then the then the current partition ends at $i-1$, and a new partition $j+1$ starts at gate $i$.  

\section{ZDD mapping in the quantum compilation flow}
\label{ZDD mapping in the Quantum Compilation Flow}

The algorithm discussed in the previous section implements the mapping of pseudo qubits in a quantum circuit specification to physical qubits on a real device. While the mapping procedure is essential for for quantum compilation, additional optimization steps can further improve the technology-mapped logic. For this reason, we propose the incorporation of the ZDD mapping techniques into a larger logic synthesis flow. In this procedure, mapping would occur after a circuit has been decomposed into one- and two-qubit operators and before a specification is compiled by a device's custom compilation tool. Completing synthesis with available compilation tools allows the opportunity to take advantage of existing optimization algorithms while the operators of the circuit are transformed into a platform's native gate library. Additionally, if the maximal partition found by the ZDD mapper does not cover the entire circuit, the native compiler of the technology platform is required to ensure that the placement of the two-qubit operators does not violate the coupling constraints of the device. It should be noted that although the ZDD mapping algorithm was evaluated using superconducting qubits as a target platform, the techniques described here are applicable to other quantum technologies that have coupling restrictions.

\section{Experimental results}

The ZDD quantum mapping algorithm was developed in C++ and was incorporated in the \textit{tweedledum} logic synthesis library found in reference~\cite{Twee}. A subset of benchmarks from~\cite{Feyn} were selected to evaluate the methods described in this work. These benchmarks contain a variety of reversible arithmetic and quantum algorithms such as a Grover's algorithm oracle, a demonstration of the Quantum Fourier Transform (QFT), and various Toffoli implementations. The benchmark set was chosen for experimentation because they are functions commonly seen in quantum synthisis literature and are publicly available for use. These benchmarks are originally specified in a .qc file format with a gate set that contains physically unrealizable multi-qubit gates. Thus, the specifications are transformed into the Clifford+T library of single- and two-qubit gates using the ``phasefold'' pass of the \textit{Feynman} toolkit~\cite{Feyn} . After this procedure, the benchmarks are in a technology-independent form that consists of elementary gates. Mapping to a target quantum device may now begin.

A ring topology was chosen as the target device during synthesis. A ring structure was chosen because this type of architecture is seen in commercial QPUs like in Rigetti's \textit{Agave} and \textit{Aspen} quantum machines. Additionally, the benefit of this structure is that it allows for a homogeneous testing environment that is flexible in size while using benchmarks that vary in number of pseudo qubits. Each benchmark was targeted to a device that contained $n$ physical qubits where $n$  is the number of pseudo qubits in a quantum algorithm. In these devices, all qubits are connected to their two adjacent neighbors, as seen in Fig.~\ref{fig:circarch}(b), and the connections are bidirectional with respect to the placement of the two-qubit $CX$ gate. Once the circuit and topology are selected, Algorithm~\ref{max-partitions} is applied to map the pseudo qubits in the design to physical qubits. The scoring operation of Eqn.~\ref{eq:score} that chooses between the SWAP circuits in $layers$ to extend the partition is set to zero, one, and one for the depth, map, and SWAP weights, respectively. If the benchmark can be covered by an entire partition during the application of Algorithm~\ref{max-partitions}, then the resulting specification is in a fully technology-mapped form and therefore compatible with the available connections of the target device. If multiple partitions are needed for a benchmark, the resulting specification is mapped using a permutation of the largest partition, making additional SWAP operations required for the design to be fully compatible with the target technology. This additional circuit modification is accomplished by the custom compilers that are provided with the Rigetti and IBM SDKs. Compiling the ZDD mapped circuits into the selected device topology with the available Rigetti and IBM compilers is the final procedure in the mapping flow. This step also ensures that all algorithm operations are translated into gates appropriate for the target device. Note that such a translation cannot lead to any further violations of the coupling constraints. After the final compilation step, the circuits are ready for execution on their respective platform since they are in a technology-mapped and optimized form. Details about the benchmarks along with experimental results of the mapping and compilation procedures can be found in Table~\ref{table:results}.

In Table~\ref{table:results}, details about gate depth, gate volume, and two-qubit gate count have been included for the ZDD mapped and compiled circuits. Metrics for the benchmarks after transformation with just the ZDD mapping procedures are also shown for reference. It should be noted that circuits transformed with only the ZDD mapper can include additional SWAP circuitry to expand the mapping partitions, and only if the maximal partition covers the entire benchmark is the resulting circuit fully mapped for the target technology. Because it is often the case that multiple partitions cover a benchmark and the maximum partition must be chosen for pseudo to physical qubit mapping, transformation by the native compiler is required. This synthesis procedure also confirms that the benchmark circuits use the appropriate gate library for the targeted technology. The benchmarks were compiled with and without preprocessing the circuit with the ZDD mapper. Circuits that improved in metrics for a particular device and benchmark whenever ZDD mapping was implemented are emphasized. On average, benchmarks mapped to a Rigetti ring topology saw a decrease of around 10\% with respect to gate depth, gate volume, and two-qubit gate count whenever ZDD mapping was included in technology-dependent logic synthesis flow before compilation. The IBM-compiled circuits, however, only saw an average decrease of 2.3\% in gate depth, gate volume, and two-qubit gate count whenever ZDD mapping was used. Individual improvements in circuit metrics of up to approximately a 50\% decrease was seen in gate volume on the Rigetti devices and up to approximately a 44\% decrease was seen in depth on the IBM devices. These findings demonstrate the potential that ZDD mapping techniques have with respect to finding more optimal solutions whenever generating technology-dependent forms of quantum circuits.

\begin{table*}[!htbp]
\centering
\caption{Gate depth, gate volume, and two-qubit metrics of benchmarks after zdd mapping, IBM compiling, and Rigetti compiling. Values that decreased whenever ZDD mapping was implemented before compilation have been emphasized.}
{\scriptsize
\begin{tabular}{c c r c | c c | c c }
\hline\hline
Benchmark & No. & & Original ZDD& Original Rigetti & Original IBM & ZDD Mapped/ & ZDD Mapped/  \\
Name & Qubits & &Mapped & Compiled &Compiled & Rigetti Compiled & IBM Compiled \\
\hline\hline
barenco\textunderscore tof\textunderscore 3 & 5  &\textit{depth:} & 64  &118  &62 &98\cellcolor{mygrey} \textit{(-16.95\%)} & 84 \textit{(+35.48\%)}  \\
&  & \textit{vol.:}  &95& 446 &180&221\cellcolor{mygrey} \textit{(-50.45\%)} &165 \textit{(-8.33\%)}\cellcolor{mygrey}\\
 &  & \textit{2q gates:}& 73  &68&67& 58\cellcolor{mygrey} \textit{(-14.71\%)}&63 \cellcolor{mygrey}\textit{(-5.97\%)} \\
 \hline
 
barenco\textunderscore tof\textunderscore 4 & 7 &\textit{depth:} &94 &230 & 131 & 155\cellcolor{mygrey} \textit{(-32.6\%)}&130 \textit{(-0.76\%)} \cellcolor{mygrey}\\
&  & \textit{vol:}& 190& 763 & 462 &449 \textit{(-41.15\%)} \cellcolor{mygrey}&335 \textit{(-27.49\%)}\cellcolor{mygrey}\\
 &  &\textit{2q gates:} &152 & 123& 177 &117 \cellcolor{mygrey}\textit{(-4.88\%)}&132 \cellcolor{mygrey}\textit{(-25.42\%)}\\
 \hline
 
barenco\textunderscore tof\textunderscore 5 & 9 &\textit{depth:}& 94 & 231 &121 &155 \textit{(-32.9\%)} \cellcolor{mygrey}&130  \textit{(+7.44\%)}\\
 &  &\textit{vol.:}& 285 &1136 &528&682 \textit{(-39.96\%)} \cellcolor{mygrey}& 505 \cellcolor{mygrey}\textit{(-4.36\%)}\\
&  &\textit{2q gates:}& 231 &184 &201&177 \textit{(-3.8\%)} \cellcolor{mygrey}&201 \textit{(+0\%)} \\
\hline

gf2\textsuperscript{$\wedge$}4\textunderscore mult & 12&\textit{depth:} &46&337 &251 &361 \textit{(+7.12\%)} &354 \textit{(+41.03\%)} \\
 & &\textit{vol:} &232&2319 &1450 & 2593 \textit{(+11.82\%)}&1511 \textit{(+4.21\%)}\\
  & &\textit{2q gates:} &145&363 &557 & 430 \textit{(+18.46\%)}&587 \textit{(+5.39\%)}\\
\hline

gf2\textsuperscript{$\wedge$}5\textunderscore mult & 15&\textit{depth:} &64&422 &259 &504 \textit{(+19.43\%)} &342 \textit{(+32.05\%)}  \\
 & &\textit{vol:} &363&3747 & 2212& 4510 \textit{(+20.36\%)}& 2351 \textit{(+6.28\%)} \\
  & &\textit{2q gates:} &230&596 &842 & 775 \textit{(+30.03 \%)} &910 \textit{(+8.07\%)}\\
\hline

grover\textunderscore 5 & 9&\textit{depth:} &210 &968  &989&872 \textit{(-9.92\%)}\cellcolor{mygrey} &552 \textit{(-44.19\%)}\cellcolor{mygrey} \\
 & &\textit{vol:}& 777 & 4857& 2909&4484 \textit{(-7.68\%)}\cellcolor{mygrey} & 2590 \textit{(-10.97\%)}\cellcolor{mygrey}\\
  & &\textit{2q gates:} &441 & 781& 1096& 739 \textit{(-5.38\%)}\cellcolor{mygrey}&1011 \textit{(-7.76\%)} \cellcolor{mygrey}\\
\hline 

hwb6 & 7&\textit{depth:} &113& 449&269 &432 \textit{(-3.79\%)}\cellcolor{mygrey}&290 \textit{(+7.81 \%)} \\
 & &\textit{vol:} &303&2032 &1049 &2027 \textit{(-0.25\%)}\cellcolor{mygrey}& 1101 \textit{(+4.96\%)}\\
  & &\textit{2q gates:}& 185& 332& 404 &338 \textit{(+1.81\%)}&422 \textit{(+4.46\%)}\\
\hline

mod\textunderscore mult \textunderscore 55 & 9&\textit{depth:}& 49  &189 &123&177 \textit{(-6.35 \%)}\cellcolor{mygrey}&144 \textit{(+17.07 \%)} \\
 & &\textit{vol:} &155 &978 &500&850 \textit{(-13.09 \%)}\cellcolor{mygrey}& 469 \textit{(-6.2\%)}\cellcolor{mygrey}\\
  & &\textit{2q gates:}& 88 &151 &193&143 \textit{(-5.3 \%)}\cellcolor{mygrey}&176 \textit{(-8.81\%)} \cellcolor{mygrey}\\
\hline

mod\textunderscore 5 \textunderscore 4 & 5&\textit{depth:} &60 &115 &94&95 \textit{(-17.4\%)}\cellcolor{mygrey}&92 \textit{(-2.13\%)}\cellcolor{mygrey} \\
 & &\textit{vol:} &121 &459 &229&308 \textit{(-32.9\%)}\cellcolor{mygrey}&239 \textit{(+4.37\%)} \\
  & &\textit{2q gates:}& 98&73 &88&79 \textit{(+8.21\%)}&92 \textit{(+4.55\%)} \\
\hline

qft \textunderscore 4 & 5&\textit{depth:}&142 & 162 & 155& 137 \textit{(-15.43\%)}\cellcolor{mygrey}&105 \textit{(-32.26\%)} \cellcolor{mygrey}\\
 & &\textit{vol:}& 247& 447 & 322&433 \textit{(-3.13\%)}\cellcolor{mygrey}&293 \textit{(-9.01\%)}\cellcolor{mygrey} \\
  & &\textit{2q gates:}&120 & 79 & 126&92 \textit{(+16.46\%)}&114 \textit{(-9.52\%)}\cellcolor{mygrey}\\
\hline

tof\textunderscore 3 & 5  &\textit{depth:} &39 & 98 & 62 &72 \textit{(-26.53\%)}\cellcolor{mygrey}&61 \textit{(-1.61\%)}\cellcolor{mygrey} \\
&  & \textit{vol.:} &75 & 309 &145&195 \textit{(-36.89\%)}\cellcolor{mygrey} &135 \textit{(-6.9\%)}\cellcolor{mygrey} \\
 &  & \textit{2q gates:}& 54 & 47&53&45 \textit{(-4.26\%)}\cellcolor{mygrey}&52 \textit{(-1.89\%)}\cellcolor{mygrey} \\
 \hline

tof \textunderscore 4 & 7&\textit{depth:}& 46 & 117 & 98 &88 \textit{(-24.79\%)}\cellcolor{mygrey}&62 \textit{(-36.73\%)}\cellcolor{mygrey}\\
 & &\textit{vol:} &125 & 505 & 326&327 \textit{(-35.25\%)}\cellcolor{mygrey}&218\textit{(-33.12\%)}\cellcolor{mygrey}\\
  & &\textit{2q gates:}& 92 & 80 & 121 &75 \textit{(-6.25\%)}\cellcolor{mygrey}&84 \textit{(-30.58\%)}\cellcolor{mygrey}\\
\hline

tof \textunderscore 5 & 9&\textit{depth:} &46 & 118& 68&89 \textit{(-24.58\%)}\cellcolor{mygrey}&62 \textit{(-8.82\%)}\cellcolor{mygrey}\\
 & &\textit{vol:}& 175 &707 &335&459 \textit{(-35.08\%)}\cellcolor{mygrey}&308 \textit{(-8.06\%)}\cellcolor{mygrey} \\
  & &\textit{2q gates:}& 130 & 112 &132 &106 \textit{(-5.36\%)}\cellcolor{mygrey}&118 \textit{(-10.61\%)}\cellcolor{mygrey}\\
\hline

vbe \textunderscore adder \textunderscore 3 & 10&\textit{depth:}& 67& 216&165&197 \textit{(-8.8\%)}\cellcolor{mygrey}& 232 \textit{(+40.61\%)}\\
 & &\textit{vol:} &162 &1244 & 765&1131 \textit{(-9.08\%)}\cellcolor{mygrey}&835 \textit{(+9.15\%)} \\
  & &\textit{2q gates:} &122& 190& 294&195 \textit{(+2.63\%)}&329 \textit{(+11.9\%)} \\
\hline\hline
\end{tabular}}
\label{table:results}
\end{table*}

\section{Conclusion}
\label{Conclusion}

We present a method for mapping quantum logic circuits to actual devices using ZDDs. The required operations and algorithm are described, and the implementation is developed and tested in a quantum compilation flow that targets devices meant for the Rigetti and IBM families of superconducting quantum computers. When experimental results are evaluated, it is observed that incorporating ZDD mapping into quantum logic synthesis in many cases allowed for more optimal circuits to be found in their technology-dependent form. These results suggest that the ZDD may be a useful tool for quantum compilation that should be continued to be investigated in the future.

\paragraph*{Acknowledgements.} The author thanks Alan Mishchenko for helpful discussion and comments. This work was supported by the Swiss National Science Foundation in the project "Technology-dependent Optimization in Quantum Compilation" (IZSEZ0_184016).

\appendix

\section{Implementation of ZDD operations}

In~\cite[Ex.~7.1.4-207]{Knuth4A}, Knuth describes the implementation
of a ZDD operation $f \mathbin{\S} k$, which is similar to the
operation $\binom{f}{k}$, which is used in this paper.  No description
of an implementation for $\binom{f}{k}$ was found in the literature,
and therefore we report our implementation here, in a similar style
Knuth used to describe the implementation of $f \mathbin{\S} k$.

$\CHOOSE(f, k)$ = ``If $k = 1$, return $f$.  If $f = \emptyset$ and
$k>0$, return $\emptyset$. If $f = \emptyset$ and $k=0$, return
$\{\emptyset\}$.  If $\binom{f}{k} = r$ is in the cache, return $r$.
Otherwise set $r \gets \CHOOSE(f_l, k)$.  If $k > 0$, set
$q \gets \CHOOSE(f_l, k - 1)$ and $r \gets \ZUNIQUE(f_v, r, q)$. Put
$\binom{f}{k} = r$ in the cache, and return $r$.''

\bibliographystyle{eptcs}
\bibliography{library}

\end{document}